\newtheorem{theorem}{Theorem}
\newtheorem{proposition}[theorem]{Proposition}
\newtheorem{obs}[theorem]{Observation}
\begin{document}

\title{The number of non-crossing perfect plane matchings \\
is minimized (almost) only by point sets in convex position}
\author{Andrei Asinowski\thanks{
Institut f\"ur Informatik, Freie Univesit\"at Berlin.
E-mail $\langle$\texttt{asinowski@inf.fu-berlin.de}$\rangle$.
Supported by the ESF EUROCORES programme EuroGIGA, CRP `ComPoSe',
Deutsche Forschungsgemeinschaft (DFG), grant \mbox{FE 340/9-1}.
}
}
\date{\empty}

\maketitle

\begin{abstract}
It is well-known that the number of non-crossing perfect matchings of $2k$ points
in convex position in the plane is $C_k$,  the $k$th Catalan number.
Garc\'ia, Noy, and Tejel proved in 2000 that for any set of $2k$ points in {general} position,
the number of such matchings is \textit{at least} $C_k$.
We show that the equality holds \textit{only} for sets of points in convex position,
and for one exceptional configuration of $6$ points.
\end{abstract}

\subsection*{Introduction, notation, result}

Let $S$ be a set of $n=2k$ points in general position
(no three points lie on the same line)
in the plane.
Under a \textit{perfect matching} of $S$
we understand a geometric perfect matching
of the points of $S$
realized by $k$ non-crossing segments.
The number of perfect matchings of $S$ will be denoted by $\mathsf{pm}(S)$.

In general, $\mathsf{pm}(S)$ depends on (the order type of) $S$.
Only for very special configurations an exact formula is known.
The well-known case is that of points in convex position:
\begin{theorem}[Classic/Folklore/Everybody knows]\label{cat}
If $S$ is a set of $2k$ points in convex position,
then $\mathsf{pm}(S)=C_k=\frac{1}{k+1}\binom{2k}{k}$,
the $k$th Catalan number.
\end{theorem}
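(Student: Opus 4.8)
The plan is to establish the Catalan recurrence $f(k) = \sum_{i=0}^{k-1} f(i)\,f(k-1-i)$ directly from the geometry of convex position, where I write $f(k) := \mathsf{pm}(S)$ for any set $S$ of $2k$ points in convex position (this is well-defined, since all such sets share the same order type and $\mathsf{pm}$ depends only on the order type). Since $f(0)=1$, corresponding to the empty matching, and the Catalan numbers are the unique sequence with $C_0=1$ satisfying this recurrence, the theorem follows immediately once the recurrence is in hand.

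To derive the recurrence I would label the points $p_1, p_2, \ldots, p_{2k}$ in convex (say clockwise) order along the boundary of the convex hull, and condition on the partner of the fixed point $p_1$ in a non-crossing perfect matching $M$. The crucial structural fact is a parity argument. Suppose $p_1$ is matched to $p_m$; the chord $p_1 p_m$ splits the remaining points into the two arcs $\{p_2, \ldots, p_{m-1}\}$ and $\{p_{m+1}, \ldots, p_{2k}\}$. Because $M$ is non-crossing, no edge of $M$ may join one arc to the other, so each arc is matched entirely within itself and hence must contain an even number of points. The first arc has $m-2$ points, which forces $m$ to be even, say $m = 2i+2$ with $0 \le i \le k-1$.

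Given this, the matchings containing the edge $p_1 p_{2i+2}$ are in bijection with pairs consisting of a non-crossing perfect matching of the $2i$ points $\{p_2, \ldots, p_{2i+1}\}$ together with one of the $2(k-1-i)$ points $\{p_{2i+3}, \ldots, p_{2k}\}$. Both of these subsets are themselves in convex position, so they contribute $f(i)$ and $f(k-1-i)$ matchings respectively, and every non-crossing matching of the whole set decomposes uniquely in this way. Summing over all admissible $i$ yields exactly the recurrence above.

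The argument is essentially elementary, so there is no deep obstacle; the one step that requires care is the non-crossing/parity claim, namely that a non-crossing matching can have no edge joining the two sides of one of its own chords. I would therefore first state and justify the standard crossing criterion for convex position: two chords $p_a p_b$ and $p_c p_d$ cross if and only if exactly one of $c, d$ lies in the open arc strictly between $a$ and $b$. This criterion underlies both the parity argument and the claim that each arc induces an independent sub-instance again in convex position, and establishing it cleanly is the main thing to get right before the recurrence falls out.
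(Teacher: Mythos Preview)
Your proposal is correct and is essentially the same argument the paper uses: the paper does not give a separate proof of Theorem~\ref{cat}, but remarks that the proof of Theorem~\ref{gnt} (fix a boundary point $A_1$, order the others by polar angle, condition on the partner $A_{2i+2}$ of $A_1$, and recurse on the two sides) ``essentially proves Theorem~\ref{cat}'' because in convex position this decomposition counts every non-crossing perfect matching exactly once. Your write-up makes the parity and non-crossing justifications more explicit, but the underlying decomposition and recurrence are identical.
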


There are several results concerning the maximum and minimum possible values of $\mathsf{pm}(S)$
over all sets of size $n$.
For the \textit{maximum} possible value of $\mathsf{pm}(S)$, only asymptotic bounds are known.
The best upper bound up to date is due to 
Sharir and Welzl~\cite{Sharir2006} 
who proved that for any $S$ of size $n$,
we have $\mathsf{pm}(S) = O(10.05^n)$.
For the lower bound, Garc{\'i}a, Noy, and Tejel~\cite{gnt} constructed
a family of examples which implies the bound of $\Omega(3^n n^{O(1)})$;
it was recently improved by Asinowski and Rote~\cite{AR} to $\Omega(3.09^n)$.

As for the \textit{minimum} possible value of $\mathsf{pm}(S)$ for sets of size $n=2k$,
Garc\'ia, Noy, and Tejel~\cite{gnt} showed that
it is attained by sets in convex position,
and thus, by Theorem~\ref{cat}, it is $C_{k}=\Omega(2^n/n^{3/2})$:

\begin{theorem}[Garc\'ia, Noy, and Tejel, 2000 \cite{gnt}]\label{gnt}
For any set $S$ of $n=2k$ points in general position in the plane,
we have $\mathsf{pm}(S) \geq C_k$.
\end{theorem}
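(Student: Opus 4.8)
The plan is to argue by induction on $k$, the base cases $k\le 1$ being trivial since $\mathsf{pm}(S)=1=C_0=C_1$. For the inductive step I would fix a vertex $p$ of the convex hull of $S$ and sort the remaining $2k-1$ points $q_1,\dots,q_{2k-1}$ by angle as seen from $p$. The crucial geometric feature to exploit is that, because $p$ lies on the convex hull, all other points are contained in a wedge with apex $p$ and opening strictly less than $\pi$; consequently the angular coordinate (measured from $p$) varies monotonically along any segment spanned by two of the $q_i$'s, so that a segment $q_aq_b$ never attains an angle outside the interval determined by $q_a$ and $q_b$.

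Next I would classify the matchings of $S$ by the partner of $p$, writing $\mathsf{pm}(S)=\sum_{i=1}^{2k-1}\mathsf{pm}(S\mid p\sim q_i)$. For a fixed $i$ consider the splitting $L_i=\{q_1,\dots,q_{i-1}\}$ and $R_i=\{q_{i+1},\dots,q_{2k-1}\}$ induced by the segment $pq_i$. The key lemma is that whenever $i$ is odd (so $|L_i|$ and $|R_i|$ are both even), any non-crossing matching of $L_i$, together with any non-crossing matching of $R_i$ and the edge $pq_i$, forms a non-crossing matching of $S$. This is immediate from the monotonicity observation: an edge inside $L_i$ occupies only angles strictly below that of $q_i$, hence meets neither $pq_i$ (which lies at the angle of $q_i$) nor any edge inside $R_i$ (which lies at angles strictly above that of $q_i$), and symmetrically. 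Thus $\mathsf{pm}(S\mid p\sim q_i)\ge \mathsf{pm}(L_i)\,\mathsf{pm}(R_i)$ for every odd $i$.

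Discarding the nonnegative terms with even $i$ then gives
\[
\mathsf{pm}(S)\ \ge\ \sum_{i \text{ odd}}\mathsf{pm}(L_i)\,\mathsf{pm}(R_i)\ =\ \sum_{j=1}^{k}\mathsf{pm}(L_{2j-1})\,\mathsf{pm}(R_{2j-1}),
\]
where $|L_{2j-1}|=2(j-1)$ and $|R_{2j-1}|=2(k-j)$. Applying the induction hypothesis to each factor and then the Catalan recurrence $C_k=\sum_{j=1}^{k}C_{j-1}C_{k-j}$ yields $\mathsf{pm}(S)\ge\sum_{j=1}^{k}C_{j-1}C_{k-j}=C_k$, completing the induction.

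I expect the main subtlety to be exactly the gap that makes this an inequality rather than an equality: in general position $p$ may also be matched to some $q_i$ with $i$ even, because edges running from $L_i$ to $R_i$ (crossing the ray through $q_i$ \emph{beyond} $q_i$, but not the segment $pq_i$ itself) can repair the parity. Recognizing that these extra matchings can only help, so that they may be safely dropped, is what turns the convex-position Catalan recurrence into a genuine lower bound; the one truly geometric point to verify with care is the non-crossing lemma, and in particular the role of the wedge opening being below $\pi$, since this is what guarantees the angular monotonicity on which everything rests.
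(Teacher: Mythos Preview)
Your proposal is correct and follows essentially the same argument as the paper: pick a convex-hull vertex, sort the remaining points by polar angle around it, match the vertex to the points of even rank (your odd $i$, the paper's $A_{2i+2}$), apply induction to the two angularly separated halves, and sum via the Catalan recurrence. Your write-up is in fact a bit more explicit than the paper's about why the two halves cannot interact (the angular-monotonicity lemma) and about which matchings are being discarded, but the underlying decomposition and inductive scheme are identical.
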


However, to the best of our knowledge, the question of 
whether \textit{only} sets in convex position have exactly $C_k$ perfect matchings,
was never studied.
In this note we show that this is \textit{almost} the case:
there exists a unique (up to order type) exception shown in Figure~\ref{fig:ex}:
\begin{figure}[h]
$$\includegraphics[scale=1]{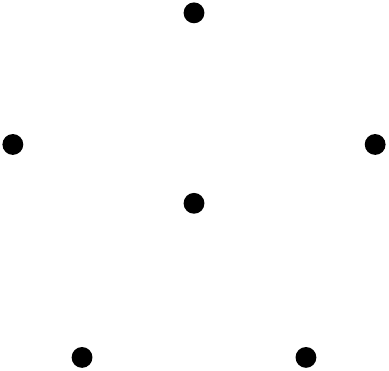}$$
\caption{A set of six points in non-convex position that has five perfect matchings.}
\label{fig:ex}
\end{figure}
\begin{theorem}\label{thm:main}
Let $S$ be a planar set of $2k$ points in general position.
We have $\mathsf{pm}(S) = C_k$ only if
$S$ is in convex position,
or if $k=3$ and $S$ is a set with the order type
as in Figure~\ref{fig:ex}.
\end{theorem}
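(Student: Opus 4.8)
The plan is to revisit the inductive proof of the lower bound (Theorem~\ref{gnt}) and extract from it the precise conditions under which equality $\mathsf{pm}(S)=C_k$ can hold, then argue that these conditions force convexity except in one small case. First I would fix a vertex $p$ of the convex hull of $S$ and sort the remaining $2k-1$ points by angle around $p$ as $q_1,\dots,q_{2k-1}$, with $q_1$ and $q_{2k-1}$ the hull-neighbours of $p$. Every perfect matching $M$ joins $p$ to exactly one $q_m$, and classifying $M$ by the index $m$ partitions all matchings. If $M$ matches $p$ to $q_m$, then the points of index $<m$ and of index $>m$ lie on opposite sides of the line $pq_m$; an edge joining the two sides is admissible only if it ``goes around'' $q_m$, i.e.\ crosses that line beyond $q_m$. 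A parity count shows the number $t$ of such around-edges satisfies $t\equiv m-1 \pmod 2$. Hence for odd $m=2j-1$ the choice $t=0$ is available and contributes exactly $\mathsf{pm}(L_j)\,\mathsf{pm}(R_j)$ matchings, where $L_j=\{q_1,\dots,q_{2j-2}\}$ and $R_j=\{q_{2j},\dots,q_{2k-1}\}$, while for even $m$ every matching needs $t\ge 1$. Applying Theorem~\ref{gnt} to $L_j,R_j$ together with the Catalan recurrence $\sum_{j} C_{j-1}C_{k-j}=C_k$ recovers $\mathsf{pm}(S)\ge C_k$.

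Next I would read off the equality conditions, which must hold for the chosen $p$ and, since $p$ was an arbitrary hull vertex, simultaneously for \emph{every} hull vertex: (i) no matching sends $p$ to an even-indexed $q_{2j}$; (ii) no matching with $p\sim q_{2j-1}$ uses an around-edge, so $t=0$ always; and (iii) each half satisfies $\mathsf{pm}(L_j)=C_{j-1}$ and $\mathsf{pm}(R_j)=C_{k-j}$. Condition (iii) says every such half is itself an equality instance, so I can run an induction on $k$: by the induction hypothesis each half is in convex position or is the exceptional $6$-point configuration of Figure~\ref{fig:ex}. The base cases $k\le 3$ I would settle by a finite inspection of order types, and this is where the exceptional configuration is discovered and verified to be non-convex with exactly $C_3=5$ matchings.

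The heart of the argument -- and the step I expect to be the main obstacle -- is the inductive step showing that for $k\ge 4$ conditions (i) and (ii) are incompatible with $S$ having an interior point. The geometric content is that an around-edge $q_aq_b$ exists precisely when some $q_m$ is separated from $p$ by the line through $q_a,q_b$, which is exactly the kind of enclosure an interior point produces; the delicate part is to exhibit, for at least one hull vertex $p$, an \emph{admissible} around-edge that extends to a full non-crossing matching (or a matching sending $p$ to an even partner), thereby contradicting (i) or (ii). I would do this by choosing $p$ so that a fixed interior point $z$ occupies a controlled angular position, build the remainder of the matching from the convex or exceptional structure supplied by (iii), and track parity to guarantee completability. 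Finally I would argue that this construction fails only when the set is too small to accommodate the enclosing edge together with a completion, which pins the sole surviving non-convex instance down to $k=3$ and the order type of Figure~\ref{fig:ex}; a direct check then confirms that this configuration meets all of (i)--(iii).
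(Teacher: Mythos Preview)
Your framework is exactly the paper's: your conditions (i) and (ii) together say precisely that no perfect matching has the paper's \emph{piercing property} (a hull-incident edge whose supporting line crosses another edge of the matching), and the paper's Observation~\ref{obs:pierce} is just the contrapositive of your equality analysis. So the reduction to ``non-convex $S$ admits a piercing matching'' is common ground.

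Where you diverge is in the main step. The paper does \emph{not} use your condition (iii) or any induction on the halves. Instead it builds the piercing matching by a direct case split on the number of interior points. With a single interior point $Q$ it runs a rotating-line argument through $Q$ to find a hull vertex $A_j$ with a controlled left/right balance, and then writes down an explicit matching in which $A_jQ$ pierces a specific boundary edge; a parity subcase ($k$ even, $k$ odd with some unbalanced line, $k$ odd with all lines balanced) pins the failure down to $k=3$ and the exceptional order type. With at least two interior points $Q,R$ it takes a hull edge $A_1A_2$ cut by $\ell(QR)$, picks $R'$ inside $\triangle A_1A_2Q$ minimizing $\angle A_1A_2R'$, and produces a piercing matching in both parities of the split across $\ell(A_1Q)$.

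Your plan to exploit (iii) inductively is a genuine alternative, but note that it does not let you skip the construction: knowing that every half $L_j,R_j$ is convex (or the $6$-point exception) does not by itself manufacture the around-edge and its completion, which is exactly the ``delicate part'' you flag. In the paper that part is the whole content of Propositions~\ref{onepoint} and~\ref{manypoints}, carried out without ever invoking (iii). So your proposal is sound as an outline, but the step you call the main obstacle is precisely where all the work lies, and the paper resolves it by concrete geometric constructions rather than by the inductive route you sketch.
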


We recall the recursive definition of Catalan numbers:
$C_0=1$;
and for $k\geq 1$,
\[C_k = \sum_{i=0}^{k-1}C_iC_{k-1-i}.\]
For two distinct points $A$ and $B$, the straight line through $A$ and $B$ will be denoted by $\ell(AB)$.
We say that segment $AB$ \textit{pierces} segment $CD$
if the segments do not cross, but the line $\ell(AB)$ crosses $CD$.

\subsection*{Discussion}
First we recall, for the sake of completeness,
the proof of Theorem~\ref{gnt} by Garc\'ia, Noy, and Tejel.

\begin{proof}
For $k=0$ and $k=1$ the claim is trivial/clear.

Let $k\geq 2$. Refer to Figure~\ref{fig:gnt}.
Let $A_1$ be any point of $S$ that lies on the boundary of $\mathrm{conv}(S)$.
Label other points of $S$ by $A_2, A_3, \dots, A_n$
according to the clockwise polar order with respect to $A_1$
(so that $A_2$ is the immediate successor and $A_n$ is the immediate predecessor of $A_1$
on the boundary of $\mathrm{conv}(S)$).
For $i=0, 1, \dots, k-1$ we bound the number of perfect matchings in which $A_1$ is connected to $A_{2i+2}$, as follows.
The line $\ell(A_1A_{2i+2})$ splits $S \setminus\{A_1, A_{2i+2}\}$ into two subsets
of sizes $2i$ and $n-2-2i$.
Therefore, if we start constructing a perfect matching by choosing the segment $A_1A_{2i+2}$ to be its member,
we can complete its construction by choosing arbitrary perfect matchings of these subsets.
By induction, the numbers of inner perfect matchings of these subsets
are (respectively) at least $C_{i}$ and at least $C_{k-1-i}$.
Thus,
the number of perfect matchings of $S$ in which $A_1$ is matched to $A_{2i+2}$
is at least $C_{i}C_{k-1-i}$, and
the total number of perfect matchings of $S$ is at least
\[\sum_{i=0}^{k-1}C_iC_{k-1-i} = C_k , \]
as claimed.
\end{proof}

\begin{figure}[h]
$$\includegraphics[scale=0.83]{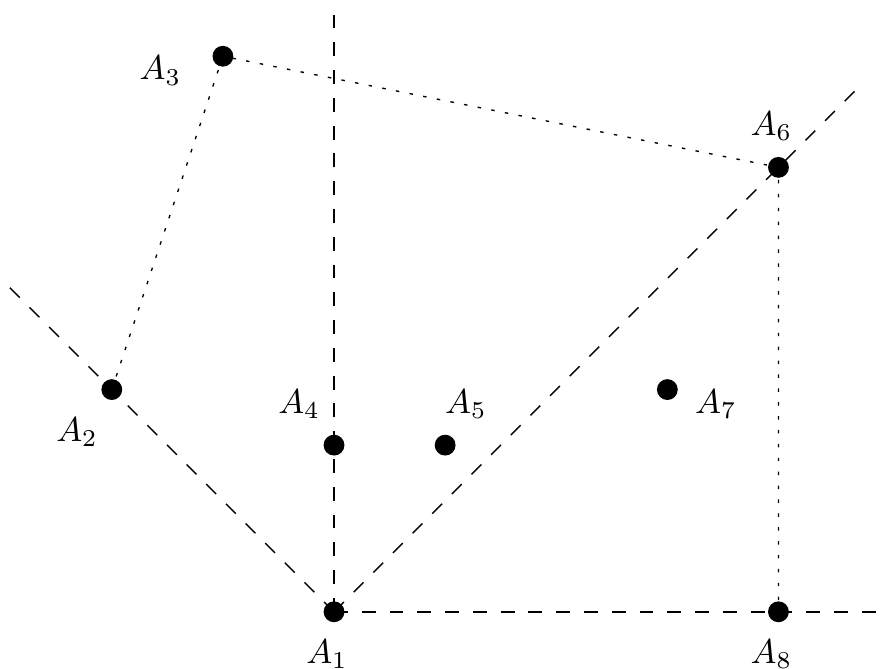}$$
\caption{Illustration to the proof (by Garc\'ia, Noy and Tejel~\cite{gnt}) of Theorem~\ref{gnt}.}
\label{fig:gnt}
\end{figure}

For sets of points in convex position this argument essentially proves Theorem~\ref{cat},
as in this case it counts \textit{all} perfect matchings.
However, for general point sets it is quite rough,
since it does not count
(1) all the perfect matchings in which $A_1$ is matched to $A_j$ with odd $j$
(for example, in Figure~\ref{fig:gnt} we miss perfect matchings that contain the edge $A_1A_5$),
and
(2) all the perfect matchings in which $A_1$ is matched to $A_j$ with even $j$,
and some edges connect pairs of points separated by the line $\ell(A_1A_j)$
(in Figure~\ref{fig:gnt} we miss perfect matchings that contain the edge $A_1A_4$
and some edges that cross $\ell(A_1A_4)$).
Therefore one could expect that we have the equality $\mathsf{pm}(S)=C_k$
for a set of size $2k$ only if it is in convex position
and, maybe, for a limited number of exceptional configurations.
Figure~\ref{fig:ex} shows such a configuration:
it has exactly $5$ ($=C_3$) perfect matchings (the central point can be connected to any other point,
and then a perfect matching can be completed in a unique way).
Thus, in our Theorem~\ref{thm:main} we essentially claim that this is the only exceptional configuration.

\subsection*{Proof of Theorem~\ref{thm:main}}

The main tool will be the following observation.

\begin{obs}\label{obs:pierce}
Let $S$ be a planar set of $2k$ points in general position.
Suppose that $S$ has a perfect matching $M$ in which there are two segments $AB$ and $CD$
such that
one of the endpoints of $AB$ lies on the boundary of $\mathrm{conv}(S)$,
and  $AB$ pierces $CD$ (see Figure~\ref{fig:pierce} for an illustration).
Then $\mathsf{pm}(S) > C_k$.
\end{obs}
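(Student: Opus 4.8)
The plan is to run the Garc\'ia--Noy--Tejel counting argument with a carefully chosen base point, and to use the piercing pair $AB,CD$ as a witness that one of the inequalities in that argument is strict. First I would normalize the setup: relabelling $A$ and $B$ if necessary, assume that $A$ is the endpoint of $AB$ lying on the boundary of $\mathrm{conv}(S)$, and take $A_1:=A$ as the base point, labelling the remaining points $A_2,\dots,A_n$ by clockwise polar angle around $A_1$, exactly as in the proof of Theorem~\ref{gnt}. Write $A_m:=B$, so that $M$ contains the edge $A_1A_m$ together with $CD$. For $j=2,\dots,n$ let $N_j$ be the number of perfect matchings of $S$ that use the edge $A_1A_j$; since every matching joins $A_1$ to exactly one other point, $\mathsf{pm}(S)=\sum_{j=2}^n N_j$.

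The key geometric fact I would extract from piercing is that $C$ and $D$ lie on opposite sides of $\ell(A_1A_m)$. Indeed, $A_1$ is a hull vertex, so $\ell(A_1A_m)$ separates $\{A_2,\dots,A_{m-1}\}$ from $\{A_{m+1},\dots,A_n\}$; and since $\ell(AB)$ crosses the segment $CD$, the points $C,D$ are split by this line. Thus $CD$ is an edge of $M$ joining the two sides of $\ell(A_1A_m)$. (That $AB$ and $CD$ do not cross is automatic, as both lie in the non-crossing matching $M$; piercing supplies precisely the separation of $C$ and $D$. Note also that $\{C,D\}\cap\{A_1,A_m\}=\emptyset$, the matching being vertex-disjoint.)

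The heart of the proof is a dichotomy on the parity of $m$, compared against the bound $\sum_{i=0}^{k-1}C_iC_{k-1-i}=C_k$ that the GNT argument produces from the even-indexed terms $N_2,N_4,\dots,N_{2k}$. If $m$ is odd, then $N_m\geq 1$ is a contribution that the GNT sum discards entirely, so $\mathsf{pm}(S)\geq \sum_{\text{even }j}N_j+N_m\geq C_k+1$. If $m$ is even, say $m=2i+2$, then the matchings that GNT counts for this index --- those all of whose remaining edges stay within one of the two sides of $\ell(A_1A_m)$, at least $C_iC_{k-1-i}$ in number --- do not include $M$, since $M$ contains the cross edge $CD$; hence $N_m\geq C_iC_{k-1-i}+1$ and again $\mathsf{pm}(S)\geq C_k+1$. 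In both cases $\mathsf{pm}(S)>C_k$, which is the claim.

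The step I expect to be the main obstacle is making the even case airtight: one must check that the matchings supplied by the GNT recursion are exactly the decomposing ones (no edge meeting $\ell(A_1A_m)$), and that the separation of $C,D$ genuinely certifies $M$ as distinct from all of them. This rests on the fact that, because $A_1$ is a hull vertex and the non-crossing matching $M$ already contains $A_1A_m$, any $M$-edge meeting $\ell(A_1A_m)$ must meet it \emph{beyond} $A_m$ rather than along the segment $A_1A_m$; such an edge therefore truly connects the two sides and cannot be absorbed into the internal matching of either side. Once this is verified, $M$ is guaranteed to be an extra matching not tallied by the even-indexed GNT sum, and the strict inequality follows.
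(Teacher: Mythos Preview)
Your proposal is correct and follows essentially the same approach as the paper: choose the hull endpoint of $AB$ as the base point $A_1$, rerun the Garc\'ia--Noy--Tejel count, and observe that $M$ is not among the matchings tallied (either because $m$ is odd, or because $CD$ crosses $\ell(A_1A_m)$). The paper states this in two sentences by pointing back to the post-proof discussion of Theorem~\ref{gnt}, whereas you spell out the parity dichotomy and the separation of $C,D$ explicitly; your final paragraph of caution is unnecessary, since once $C$ and $D$ lie on opposite sides of $\ell(A_1A_m)$ the matching $M$ manifestly cannot arise from separate matchings of the two sides.
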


\begin{figure}[h]
$$\includegraphics[scale=0.83]{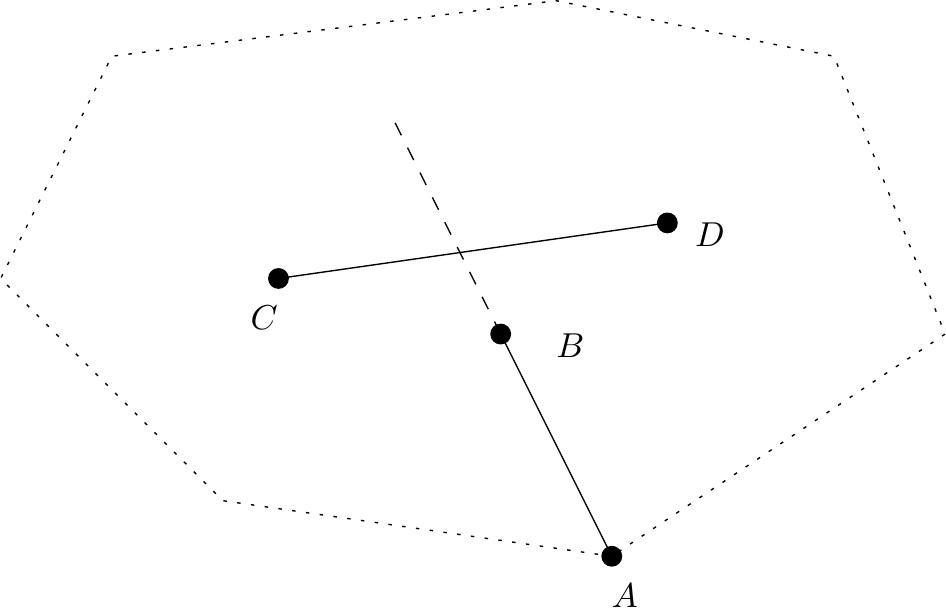}$$
\caption{The \textit{piercing property}.}
\label{fig:pierce}
\end{figure}

\begin{proof}
Set $A_1:=A$ an apply the proof of Theorem~\ref{gnt} for this choice.
It gives $\mathsf{pm}(S)\geq C_k$, but,
as we explained after the proof of Theorem~\ref{gnt}, at least the perfect matching $M$ is not counted.
Therefore we have $\mathsf{pm}(S) > C_k$.
\end{proof}

The property of perfect matchings as in the assumption of Observation~\ref{obs:pierce} 
will be called the \textit{piercing property}.
We shall show that any set of $2k$ points in general position,
except the sets in convex position and
those with order type as in the example from Figure~\ref{fig:ex},
has a perfect matching with the piercing property.
In Propositions~\ref{onepoint} and~\ref{manypoints}
we prove this under assumption that the interior of
$\mathrm{conv}(S)$ contains exactly one or, respectively, several points of $S$.

\begin{proposition}\label{onepoint}
Let $S$ be a set of $n=2k$ points in general position
such that the interior of $\mathrm{conv}(S)$ contains exactly one point of $S$.
Then $S$ has a perfect matching with the piercing property,
unless $S$ has the order type as in Figure~\ref{fig:ex}.
\end{proposition}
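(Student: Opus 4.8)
\emph{Plan.} Write $P$ for the unique interior point and let $A_1,\dots,A_{2k-1}$ be the hull vertices of $S$ in convex cyclic order. Since $P$ is the only interior point, in \emph{every} perfect matching $P$ is matched to some hull vertex $A_j$, and the remaining $2k-2$ hull points, still in convex position, are matched among themselves by chords none of which may cross the segment $PA_j$. For a fixed $j$ the ray from $A_j$ through $P$ leaves $\mathrm{conv}(S)$ through a unique boundary edge $A_sA_{s+1}$ (the \emph{antipodal edge} of $A_j$), and $\ell(PA_j)$ splits the other vertices into two arcs, $\{A_{j+1},\dots,A_s\}$ of size $m_j$ and $\{A_{s+1},\dots,A_{j-1}\}$ of size $2k-2-m_j$. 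The first point I would record is an immediate source of piercing: if $m_j$ is \emph{odd} for some $j$, both arcs have odd size, and the matching made of $PA_j$, the antipodal edge $A_sA_{s+1}$, and arbitrary non-crossing matchings of the two now-even arcs $\{A_{j+1},\dots,A_{s-1}\}$ and $\{A_{s+2},\dots,A_{j-1}\}$ is non-crossing. In it $PA_j$ pierces $A_sA_{s+1}$ (the line meets it beyond $P$, the segment does not) and has the hull endpoint $A_j$, so Observation~\ref{obs:pierce} applies.

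It remains to treat the case where $m_j$ is even for \emph{every} $j$, and the plan is to show this forces a rigid balanced configuration. Letting $g_j$ count the vertices whose direction from $P$ lies in the arc antipodal to the gap between $A_j$ and $A_{j+1}$, a short sweep argument gives the local relation $m_{j+1}-m_j=g_j-1$; summing over all $j$ yields $\sum_j g_j=2k-1$, and counting, for each unordered pair of vertices, the one half-plane separating it gives $\sum_j m_j=\binom{2k-1}{2}$. If all $m_j$ are even, the relation forces every $g_j$ to be odd, hence (being non-negative and summing to $2k-1$) every $g_j=1$; then all $m_j$ are equal, necessarily to $k-1$. In particular $k$ is odd and every line $\ell(PA_j)$ halves the other vertices, $k-1$ on each side.

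In this balanced case the antipodal edge alone breaks parity, so I would use two nested crossing edges: match $PA_j$, the antipodal edge $A_sA_{s+1}$, the next chord $A_{s-1}A_{s+2}$, and non-crossing matchings of the two remaining arcs, each of even size $k-3$. The outer edge $A_sA_{s+1}$ is always pierced by $PA_j$, and the whole thing is a valid non-crossing matching \emph{provided} $A_{s-1}A_{s+2}$ also meets $\ell(PA_j)$ beyond $P$ rather than crossing the segment $PA_j$ — equivalently, provided $P$ lies on the $A_j$-side of the chord $C_j:=A_{s-1}A_{s+2}$. Thus the construction fails for all $j$ at once only if $P$ lies in the central region $R=\bigcap_j\{\text{far side of }C_j\}$. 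Each $C_j$ cuts off exactly the quadrilateral $A_{s-1}A_sA_{s+1}A_{s+2}$, and for $k\ge 5$ two ``opposite'' such caps (e.g. those of the $C_j$ cutting off $\{A_k,A_{k+1}\}$ and $\{A_1,A_2\}$) are disjoint, so $R=\emptyset$ and some $j$ works. For $k=3$ the chords $C_j$ are the five short diagonals of the pentagon, $R$ is their central cell, and if $P\in R$ then $P$ sits in the central pentagon — precisely the order type of Figure~\ref{fig:ex}.

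The bookkeeping before the balanced case is routine; the main obstacle is the balanced case itself. The two delicate points are proving $R=\emptyset$ for \emph{all} $k\ge 5$ and for arbitrary (not merely regular) convex hulls, i.e. that two caps cut off by sliding four-vertex windows are genuinely disjoint, and verifying that the sole survivor, $k=3$ with $P$ in the central cell, realizes the order type of Figure~\ref{fig:ex} and indeed admits no piercing matching at all.
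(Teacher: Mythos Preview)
Your plan is correct and mirrors the paper's proof in its overall architecture: match the interior point to a hull vertex $A_j$, pierce the antipodal hull edge whenever the two arcs have odd length, and in the residual ``balanced'' situation (all $m_j=k-1$, forcing $k$ odd) add the nested chord $A_{s-1}A_{s+2}$; the $k=3$ case is then finished by inspection. The paper arrives at the balanced case via a rotating-line argument (finding a halving line, then splitting on the parity of $k$ and on whether \emph{every} line $\ell(QA_j)$ halves), while you get there by the $g_j$ bookkeeping; the two are equivalent.

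The only substantive difference is how you certify, for $k\ge 5$, that the chord $A_{s-1}A_{s+2}$ avoids the segment $PA_j$. You search over $j$ and argue that two ``opposite'' four-vertex caps are disjoint, so $P$ must miss at least one and that $j$ works. The paper instead shows that \emph{any} fixed $j$ already works: since $\ell(PA_{s-2})$ is also a halving line, it separates $A_j$ from both $A_{s-1}$ and $A_{s+2}$ (this uses $k\ge 5$), so the entire segment $PA_j$ lies on one side of that line and $A_{s-1}A_{s+2}$ on the other. The paper's step is a touch slicker—no need to find a good $j$, and it exploits the balanced hypothesis one more time—but your disjoint-caps argument is equally valid.
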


\begin{proof}
Let $Q\in S$ be the point that lies in the interior of $\mathrm{conv}(S)$.
Label all other points of $S$
by $A_1, A_2, \dots, A_{n-1}$ according to the clockwise cyclic order
in which they appear on the boundary of $\mathrm{conv}(S)$.
We apply the standard rotating argument on directed lines that pass through $Q$:
we choose one such line and rotate it around $Q$,
keeping track of
the difference $\delta$ between
the number of points of $S$ to the right of the line
and
the number of points of $S$ to its left.
We observe that when the line makes half a turn, $\delta$ changes the sign;
that $\delta$ can only change by $\pm 1$ when the line meets or leaves one of the points of $S$;
and that $\delta$ is even if and only if the line contains one of the points of $S \setminus \{Q\}$.
It follows that for some $j\in\{1, 2, \dots, n-1\}$
we have $\delta=0$ for the line $\ell(QA_j)$.
Thus $\ell(QA_j)$ halves $S$:
there are $k-1$ points of $S$ in each open half-plane bounded by this line.

\smallskip

\textbf{Case 1: $k$ is even.}
Refer to Figure~\ref{fig:cases}(a).
We assume without loss of generality that $\ell(QA_1)$ halves $S$.
Consider the perfect matching $\{A_1Q, A_2A_3, A_4A_5, \dots, A_{n-2}A_{n-1}\}$.
In this matching, $A_1Q$ pierces $A_kA_{k+1}$.
Thus, this matching has the piercing property.

\smallskip

\textbf{Case 2: $k$ is odd.} Here we have two subcases.

\smallskip

\textbf{Subcase 2a: The line $\ell(QA_{j})$ halves $S$ \textit{not for all} $j\in\{1, 2, \dots, n-1\}$.}
Refer to Figure~\ref{fig:cases}(b).
We apply the rotating argument again
and conclude that for some $j_0\in\{1, 2, \dots, n-1\}$
we have $\delta=\pm 2$ for the line $\ell(QA_{j_0})$
(the two signs corresponding to different ways to orient the line).
In other words,
the open half-planes bounded by $\ell(QA_{j_0})$ contain $k-2$ and $k$ points of $S$.
We assume without loss of generality that
$j_0=1$,
and that the open half-planes bounded by $\ell(QA_{1})$ contain respectively the following sets of points:
$\{A_2, A_3, \dots, A_{k-1}\}$ and $\{A_{k}, A_{k+1}, \dots, A_{n-1}\}$.
Consider the perfect matching
$\{A_1Q, A_2A_3, A_4A_5, \dots, A_{n-2}A_{n-1}\}$.
In this matching, $A_1Q$ pierces $A_{k-1}A_{k}$.
Thus, it has the piercing property.

\smallskip

\textbf{Subcase 2b: The line $\ell(QA_{j})$ halves $S$ \textit{for all} $j\in\{1, 2, \dots, n-1\}$.}
Refer to Figure~\ref{fig:cases}(c).
Assume $k \geq 5$.
The segment $A_{k-1}A_{k+2}$ does not cross the segment $A_1Q$:
indeed, the line $\ell(QA_{k-2})$ halves $S$ and thus crosses the segment $A_{n-3}A_{n-2}$;
thus the points $A_{k-1}, A_{k+2}$ on one hand
and the point $A_1$ on the other hand
lie in different open half-planes bounded by $\ell(QA_{k-2})$.
Consider the perfect matching
\[
\left\{A_1Q, \ A_{k-1}A_{k+2}, \ A_{k}A_{k+1}, \ A_iA_{i+1}\colon i\in\{2, 4, \dots, k-5, k-3; k+3, k+5, \dots, n-4, n-2\}  \right\}
\]
In this matching, $A_1Q$ pierces $A_{k-1}A_{k+2}$ (and $A_{k}A_{k+1}$).
Thus, it has the piercing property.

\smallskip

For $k=3$ the argument above (``the segment $A_{k-1}A_{k+2}$ does not cross the segment $A_1Q$'') does not apply.
It is easy to verify by case distinction that only for the order type from Figure~\ref{fig:ex}
there is no perfect matching with the piercing property.
\end{proof}

\begin{figure}[h]
$$\includegraphics[scale=0.83]{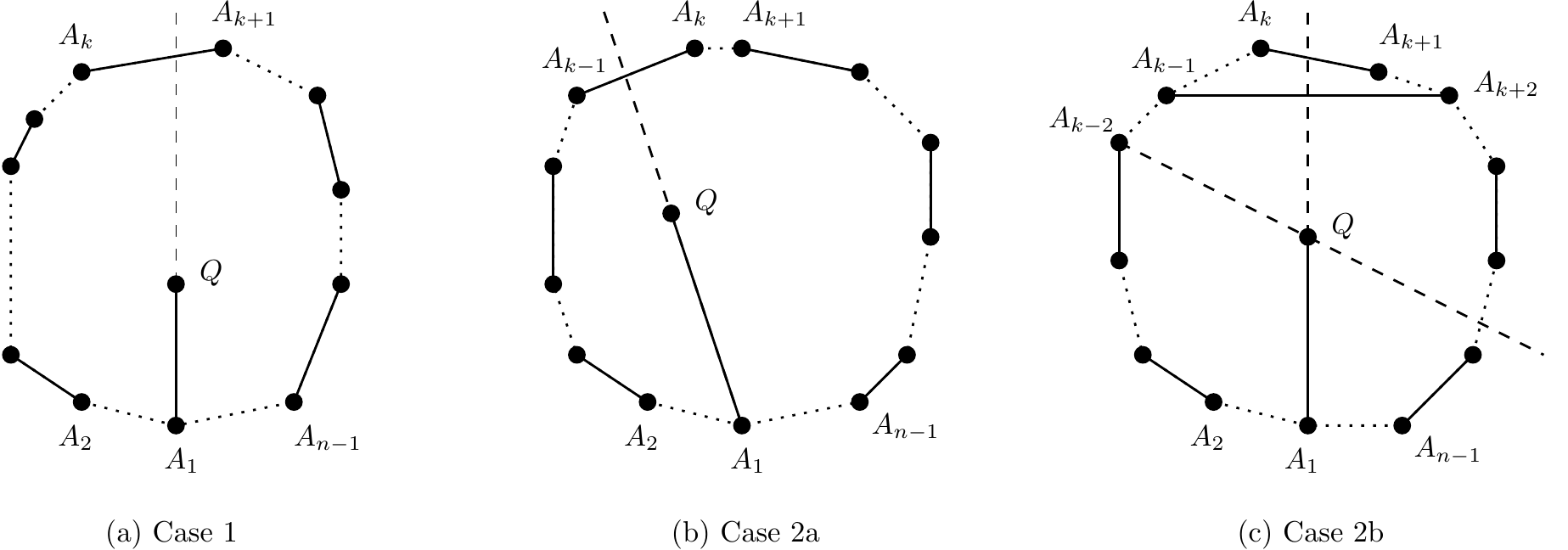}$$
\caption{Illustration of the proof of Proposition~\ref{onepoint}.}
\label{fig:cases}
\end{figure}

\begin{proposition}\label{manypoints}
Let $S$ be a set of $n=2k$ points in general position
such that the interior of $\mathrm{conv}(S)$ contains several points of $S$.
Then $S$ has a matching with the piercing property.
\end{proposition}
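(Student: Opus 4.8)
The plan is to exhibit, for every such $S$, a single non-crossing perfect matching with the piercing property; by Observation~\ref{obs:pierce} this already gives $\mathsf{pm}(S)>C_k$, and (unlike in Proposition~\ref{onepoint}) I expect no exceptional configurations to survive, precisely because a second interior point supplies the freedom that was missing in the one-point case. I would run the same rotating argument as in Proposition~\ref{onepoint}, but now pivoting about an interior point $Q$: since $S\setminus\{Q\}$ has the odd cardinality $2k-1$, rotating a directed line about $Q$ and tracking $\delta$ produces a point $R\in S\setminus\{Q\}$ for which $\ell(QR)$ halves $S\setminus\{Q,R\}$ into two open half-planes with exactly $k-1$ points each.

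The geometric engine is the following remark exploiting that $Q$ is interior. Extend the segment $QR$ past $Q$; since $Q\notin\partial\mathrm{conv}(S)$, the ray leaves $\mathrm{conv}(S)$ through some hull edge $B_1B_2$, whose endpoints lie beyond $Q$ and on opposite sides of $\ell(QR)$. Hence, whenever $R$ is itself a hull vertex, any non-crossing perfect matching containing both $QR$ and $B_1B_2$ has the piercing property: taking the piercing edge to be $QR$ (with boundary endpoint $R$), the line $\ell(QR)$ separates $B_1$ and $B_2$ and meets $B_1B_2$ beyond $Q$, so $QR$ pierces $B_1B_2$ while a straight segment crossing $\ell(QR)$ only beyond $Q$ cannot meet the segment $QR$ itself. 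The task thus reduces to completing $\{QR,B_1B_2\}$ to a non-crossing perfect matching, which I would do by matching the $k-2$ remaining points on each side of $\ell(QR)$, pairing them consecutively in the manner of Proposition~\ref{onepoint} so that no edge crosses $QR$ or the relevant part of $B_1B_2$.

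The main obstacle, as in Proposition~\ref{onepoint}, is parity together with the non-crossing requirement: matching each side of $\ell(QR)$ internally requires $k-2$ to be even, and when it is not I must route a second edge across $\ell(QR)$, placed beyond $Q$ so that it is again pierced by $QR$ while restoring the parity on both sides. This is exactly where the hypothesis of \emph{several} interior points enters: a further interior point $P\neq Q$ can serve as an endpoint of such a crossing edge beyond $Q$, so the pierced straddling edge can always be produced. This is the degree of freedom that was absent in the one-point case and that forced the exception of Figure~\ref{fig:ex}. I would organise the completion by cases on the parity of $k$, choosing in each case a concrete consecutive pairing of the two halves that leaves one (respectively two) edges crossing $\ell(QR)$ beyond $Q$.

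Finally I would dispose of the remaining case in which the halving partner $R$ is itself an interior point, so that neither endpoint of $QR$ lies on $\partial\mathrm{conv}(S)$. Here I would either rotate further about $Q$ to reach a boundary halving partner, or keep the same line and use a hull vertex just beyond $R$ as the boundary endpoint of the piercing edge, applying the identical piercing mechanism on the $R$-side of the line. In all situations the construction yields a non-crossing perfect matching with the piercing property, with no surviving exceptions.
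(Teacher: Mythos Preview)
Your plan has the right instinct—pivot about one interior point and spend the second one on the parity repair—but two of the steps are only asserted, and neither goes through as written.

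When $k$ is odd and the halving partner $R$ is a hull vertex, you claim that a second interior point $P\neq Q$ ``can serve as an endpoint of a crossing edge beyond $Q$.'' Nothing in your setup controls where $P$ lies relative to $\ell(QR)$: if $P$ sits on the $R$-side of $Q$, every segment from $P$ to a point on the opposite half-plane may meet $\ell(QR)$ between $Q$ and $R$, i.e.\ cross $QR$ rather than be pierced by it. Even if you produce such an edge $PX$, you still need the remaining $k-3$ points on each side to admit a non-crossing matching avoiding $PX$; this needs $P$ and $X$ to be extremal on their sides, which you have not arranged. The case where the halving partner $R$ is itself interior is also not handled: rotating further need not ever reach a boundary halving partner (take a small convex hull with many interior points), and ``use a hull vertex just beyond $R$'' does not by itself give a matching edge with a boundary endpoint that pierces another matching edge.

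The paper avoids both problems by not insisting on a halving line. It anchors the line at a hull vertex from the start: take $\ell(A_1Q)$ with $A_1$ on the hull and $Q$ interior, accepting whatever split $S_1,S_2$ of $S\setminus\{A_1,Q\}$ this gives. The second interior point is then \emph{pre-positioned} rather than hoped for: since $\ell(QR)$ exits through some hull edge, relabelled $A_1A_2$, with $R$ between $Q$ and that edge, one has an interior point inside triangle $A_1A_2Q$; replacing it by the interior point $R'$ in that triangle minimising $\angle A_1A_2R'$ makes the triangle $A_1A_2T$ (with $T=\ell(A_1Q)\cap\ell(A_2R')$) empty of $S$. Now both parities are immediate. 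If $|S_1|,|S_2|$ are odd, add the far hull edge $A_jA_{j+1}$ that $\ell(A_1Q)$ crosses and match the two sides arbitrarily; $A_1Q$ pierces $A_jA_{j+1}$. If they are even, match $S_2$ arbitrarily and match $S_1$ so as to include $A_2R'$ (the empty triangle guarantees this is compatible); then $A_2R'$ pierces $A_1Q$. The key move you are missing is to fix the boundary endpoint first and to \emph{choose} where the auxiliary interior point sits, via the empty-triangle refinement, instead of relying on an uncontrolled $P$.
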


\begin{proof}
Label the points of $S$ that lie on the boundary of $\mathrm{conv}(S)$
by $A_1, A_2, \dots, A_{\ell}$ according to the clockwise cyclic
order in which they appear on the boundary.

Let $Q$ and $R$ be two points of $S$ that lie in the interior of $\mathrm{conv}(S)$.
Assume without loss of generality that $\ell(QR)$ crosses the segment $A_1A_2$ so that
$R$ lies on this line between $Q$ and $\ell(QR) \cap A_1A_2$.

Consider the triangle $A_1A_2Q$.
The point $R$ lies in its interior.
Let $R'$ be the point of $S$ in the interior of triangle $A_1A_2Q$
for which the angle $\angle A_1 A_2 R'$ is the smallest.
Denote $T = \ell(A_1 Q) \cap \ell(A_2 R')$.
Due to the choice of $R'$, the interior of triangle $A_1 A_2 T$
does not contain any point of $S$.

\begin{figure}[h]
$$\includegraphics[scale=0.83]{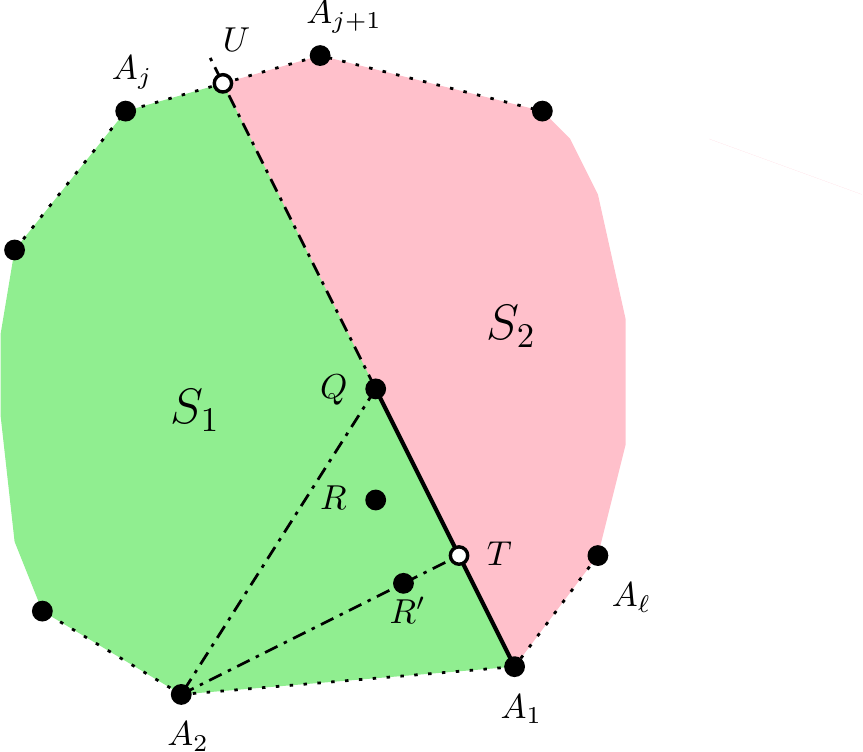}$$
\caption{Illustration of the proof of Proposition~\ref{manypoints}.}
\label{fig:several2}
\end{figure}

The line $\ell(A_1 Q)$ crosses the boundary of $\mathrm{conv}(S)$ twice:
in point $A_1$, and in a point $U$ that belongs to the interior of some segment $A_j A_{j+1}$.
Let $S_1$ and $S_2$ be the sets of points of $S$ that lie respectively in the open half-planes bounded by $\ell (A_1Q)$
(so that $S_1$ contains $A_2$ and $A_j$, and $S_2$ contains $A_{j+1}$ and $A_{\ell}$).
Figure~\ref{fig:several2} illustrates the introduced notation
(the black points belong to $S$, and the white points are reference points that do not belong to $S$).

We start constructing a perfect matching $M$ by taking the segment $A_1 Q$.

If $|S_1|$ and $|S_2|$ are odd, we take the segment $A_jA_{j+1}$ to be a member of $M$,
and then complete constructing $M$ by taking arbitrary perfect matchings of
$S_1\setminus\{A_j\}$ and of $S_2\setminus\{A_{j+1}\}$.
The obtained perfect matching $M$ has the piercing property since $A_1Q$ pierces $A_j A_{j+1}$.

If $|S_1|$ and $|S_2|$ are even, we complete constructing $M$ by taking an arbitrary perfect matching of
$S_2$ and some perfect matching of $S_1$ that contains $A_2 R'$ (this is possible since the interior of triangle $A_1A_2T$
does not contain points of $S$).
The obtained perfect matching $M$ has the piercing property since $A_2 R'$ pierces $A_1 Q$.

Notice that our proof applies as well for the special case when $A_2 = A_j$
(and no other coincidence among the points $A_1, A_2, A_j, A_{j+1}$ is possible).
\end{proof}

Now we can complete the proof of Theorem~\ref{thm:main}.
In Propositions~\ref{onepoint} and~\ref{manypoints} we showed that
any set of $2k$ points in general position has a matching with the piercing property,
unless it is in convex position or has the order type as in the example from Figure~\ref{fig:ex}.
From Observation~\ref{obs:pierce} we know that if $S$ has a matching with the piercing property,
then $\mathsf{pm}(S)> C_k$.


\begin{thebibliography}{00}
\bibitem{AR}
A.~Asinowski and G.~Rote.
Point sets with many non-crossing perfect matchings.
arXiv:\href{http://arxiv.org/abs/1502.04925}{1502.04925} [cs.CG], 33~pages.

\bibitem{gnt}
A.~Garc{\'i}a, M.~Noy, and J.~Tejel.
Lower bounds for the number of crossing-free subgraphs of~$K_n$.
\textit{Comput.\ Geom.}, 16 (2000), 211--221.

\bibitem{Sharir2006}
M.~Sharir and W.~Welzl.
On the number of crossing-free matchings, cycles, and partitions.
\textit{SIAM J.\ Comput.}, 36:3 (2006), 695--720.

\end{thebibliography}
\end{document}